\newtheorem{theorem}{Theorem}
\newenvironment{proof}[1][Proof]{\noindent\textbf{#1.} }{\ \rule{0.5em}{0.5em}}
\begin{document}

\title{Optimal Power Control for Analog Bidirectional Relaying with
Long-Term Relay Power Constraint }
\author{%
\IEEEauthorblockN{Zoran Hadzi-Velkov\IEEEauthorrefmark{1}, Nikola
Zlatanov\IEEEauthorrefmark{2}, and Robert Schober\IEEEauthorrefmark{3}
\vspace{3mm} }
\IEEEauthorblockA{\IEEEauthorrefmark{1}Faculty of Electrical
Engineering and Information Technologies, Ss. Cyril and Methodius
University, Skopje, Macedonia \vspace{-0.0mm}}
\IEEEauthorblockA{\IEEEauthorrefmark{2}Department of Electrical
and Computer Engineering, University of British Columbia,
Vancouver, Canada
\IEEEauthorblockA{\IEEEauthorrefmark{3}Department of Electrical,
Electronics and Communications Engineering, University of
Erlangen-Nuremberg, Germany \vspace{-0.0mm}} \vspace{-5mm}}}
\maketitle

\let\thefootnote\relax\footnotetext{This work has been supported by the Alexander
von Humboldt fellowship program for experienced researchers}

\begin{abstract}
Wireless systems that carry delay-sensitive information (such as speech
and/or video signals) typically transmit with fixed data rates, but may
occasionally suffer from transmission outages caused by the random nature of
the fading channels. If the transmitter has instantaneous channel state
information (CSI) available, it can compensate for a significant portion of
these outages by utilizing power allocation. In a conventional dual-hop
bidirectional amplify-and-forward (AF) relaying system, the relay already
has instantaneous CSI of both links available, as this is required for relay
gain adjustment. We therefore develop an optimal power allocation strategy
for the relay, which adjusts its instantaneous output power to the minimum
level required to avoid outages, but only if the required output power is
below some cutoff level; otherwise, the relay is silent in order to conserve
power and prolong its lifetime. The proposed scheme is proven to minimize
the system outage probability, subject to an average power constraint at
the relay and fixed output powers at the end nodes.
\end{abstract}

\section{Introduction}

Bidirectional (two-way) relaying has higher bandwidth efficiency compared to
unidirectional relaying \cite{R1}-\cite{R3}, and it is also a more suitable
option for applications where the end nodes intend to exchange information
(e.g., in interactive applications). The amplify-and-forward (AF) version of
bidirectional relaying, often referred to as the \textit{analog network
coding} \cite{R3}, has recently been extensively studied \cite{R4}-\cite{R10}.
In this paper, we consider bidirectional analog network coding
with fixed information rates, which is suitable for delay-sensitive applications, such
as bidirectional interactive speech and/or video communication. Fixed-rate
communication systems are usually characterized by the capacity outage
probability (OP), which is a relevant performance measure in quasi-static
(i.e., slowly fading) channels. In these systems, each transmitted codeword
is affected by only one channel realization.

Papers \cite{R4}, \cite{R5}, and \cite{R6} focus on the OP analysis of dual-hop bidirectional
AF systems. However, \cite{R4} and \cite{R5} assume that the end nodes and the relay
transmit with fixed powers, although they have instantaneous channel state
information (CSI) available. In particular, for each channel realization, the relay needs
CSI to set the amplification gain factor, whereas the end nodes need CSI to
cancel out self-interference and to decode the desired codeword. The
available CSI in fixed-rate bidirectional relaying systems can be used for
optimal power allocation at the end nodes and the relay. Papers \cite{R6}-%
\cite{R9} develop power allocation schemes that optimize different system
objectives; \cite{R6} minimizes the OP of either one of the two traffic
flows, \cite{R8} proposes a power allocation that balances the individual
outage probabilities of the two end nodes, \cite{R7} maximizes the
instantaneous sum rate of bidirectional AF systems, and \cite{R9} minimizes the
total consumed energy such that the OPs of both traffic flows are maintained
below some predefined values. However, the power allocations in these papers
are subject to \textit{short-term power} constraints, which limits the
codeword power in each channel realization.

On the other hand, it is also possible to adopt \textit{average (long-term)
power} constraints so as to limit the average power of all codewords over
all channel realizations \cite{R13}. For point-to-point channels, such power
adaptation is known as \textit{truncated channel inversion} and has been
introduced in \cite{R12}. For unidirectional relaying, optimal power
allocation for source and relay has been studied for both conventional
amplify-and-forward \cite{R10} and decode-and-forward (DF) \cite{R11}
relaying systems under various average power constraints. Optimal power
allocation has been shown to introduce significant performance improvement
relative to constant power transmission \cite{R10}-\cite{R12}. However, the
literature does not offer similar results for bidirectional relaying with
long-term power constraints.

In this work, we derive power control strategies for the relay in
bidirectional dual-hop AF relaying systems. For predefined constant rates in
both directions, the proposed power allocation achieves minimization of the
system OP assuming an average power constraint at the relay. The end nodes
are assumed to be simple nodes, equipped  with cheap power amplifiers,
and therefore are unable to support the high peak-to-average power ratios
at their output, required for channel inversion. Thus, the end nodes transmit
with fixed powers.

The relay applies the proposed optimal power control strategy based on the
already available knowledge of the channel coefficients of both links. Intuitively, it is not
necessary for the relay to transmit at its maximum available power in each
transmission cycle, but to transmit with the minimum power required to avoid
outages, or sometimes even be silent when outages are unavoidable, thus
conserving power. In other words, we allow outages to occur in cases of deep
fades, but for the rest of the time we ensure successful transmissions at
the predefined constant transmission rate.

\section{System and channel model}

The considered bidirectional relaying system comprises two end-nodes ($%
S_{1}$ and $S_{2}$) and a half-duplex AF relay $R$. The bidirectional
communication consists of two parallel unidirectional communication
sessions, $S_{1}\rightarrow S_{2}$ and $S_{2}\rightarrow S_{1}$. Each
communication session is realized at a fixed information rate, $R_{01}$ and $%
R_{02}$, respectively. The OP for this system is defined as the probability
that at least one (or both) of the communications sessions is in outage.

The complex coefficients of the $S_{1}-R$ and $S_{2}-R$ channels are denoted
by $\alpha $ and $\beta $, respectively, whereas their respective squared amplitudes are $%
x=|\alpha |^{2}$ and $y=|\beta |^{2}$ with average values $\Omega _{X}$ and $%
\Omega _{Y}$. We assume that the $R-S_{1}$ channel is
reciprocal to the $S_{1}-R$ channel, and the $R-S_{2}$ channel is reciprocal
to the $S_{2}-R$ channel. We adopt the Rayleigh block fading model, which
means that the values of $\alpha $ and $\beta $ are constant for each
transmission cycle, but change from one transmission cycle to the next. In
each transmission cycle, the pair $\left( x,y\right) $ denotes the channel
state, where both $x$ and $y$ follow an exponential probability density
function (PDF). The received signals at the end nodes and the relay are corrupted
by additive white Gaussian noise (AWGN) with zero mean and unit variance.

We assume a two-phase transmission cycle consisting of the mutiple access
phase and the broadcast phase. In two-phase relaying, bidirectional
communication can be realized only via the relayed link $S_{1}-R-S_{2}$. In
the first phase, $S_{1}$ and $S_{2}$ simultaneously transmit their
codewords $s_{1}(t)$ and $s_{2}\left( t\right) $ with respective information
rates $R_{01}$ and $R_{02}$ and respective fixed output powers $P_{S1}$ and $%
P_{S2}$. The AF relay receives the composite signal and amplifies it by applying the
following gain:
\begin{equation}
G^{2}=\frac{P_{R}\left( x,y\right) }{P_{S1}x+P_{S2}y+1}\text{,}  \label{1}
\end{equation}%
where $P_{R}\left( x,y\right) $ is the relay output power, adjusted
according to the channel state $\left( x,y\right) $. The relay is assumed to
know $x$ and $y$.

In the second phase, the relay broadcasts the composite signal and the end
nodes receive it. Node $S_{1}$ is assumed to know the products $G\alpha
\beta $ and $G\alpha ^{2}$, whereas node $S_{2}$ is assumed to know the
products $G\alpha \beta $ and $G\beta ^{2}$. Using the available CSI, each
end node ``subtracts" its own signal from the received one, and then attempts
decoding of the faded and noisy version of the desired signal that
originates from the other end node. The desired signal-to-noise ratio (SNR)
at $S_{2}$ is thus given by
\begin{equation}
\gamma _{2}=\frac{P_{S1}x\text{ }P_{R}y}{P_{S1}x+\left( P_{S2}+P_{R}\right)
y+1}\text{,}  \label{2}
\end{equation}%
whereas the SNR at $S_{1}$ is given by
\begin{equation}
\gamma _{1}=\frac{P_{S2}x\text{ }P_{R}y}{\left( P_{S1}+P_{R}\right)
x+P_{S2}y+1}\text{. }  \label{3}
\end{equation}%
Based on (\ref{2}) and (\ref{3}), the relaying system can support $S_{1}$'s
transmission rate, $R_{01}$, if the instantaneous capacity of the end-to-end
channel $S_{1}\rightarrow R\rightarrow S_{2}$ exceeds this rate such that $%
\frac{1}{2}\log _{2}\left( 1+\gamma _{2}\right) \geq R_{01}$, where the
pre-log factor $1/2$ is due to the two-phase transmission cycle, or
equivalently $\gamma _{2}\geq \delta _{1}$, where $\delta _{1}=2^{2R_{01}}-1$%
. Similarly, the relaying system can support $S_{2}$'s transmission rate, $%
R_{02}$, if the instantaneous capacity of the end-to-end channel $%
S_{2}\rightarrow R\rightarrow S_{1}$ exceeds this rate such that $\frac{1}{2}%
\log _{2}\left( 1+\gamma _{1}\right) \geq R_{02}$, or equivalently $\gamma
_{1}\geq \delta _{2}$, where $\delta _{2}=2^{2R_{02}}-1$. The system OP is
therefore determined as
\begin{eqnarray*}
P_{out} = \Pr \left\{ \frac{1}{2}\log _{2}\left( 1+\gamma _{2}\right) <R_{01}%
\text{ OR }\frac{1}{2}\log _{2}\left( 1+\gamma _{1}\right) <R_{02}\right\}
\end{eqnarray*}%
\begin{equation}
=1-\Pr \left\{ \gamma _{2}\geq \delta _{1}\text{ AND }\gamma _{1}\geq \delta
_{2}\right\} \text{, }  \label{4}
\end{equation}%
where $\gamma _{2}$ and $\gamma _{1}$ are given by (\ref{2}) and (\ref{3}),
respectively.

\section{Outage Minimization}

We now present the optimal power allocation (OPA) strategy at the relay, $%
P_{R}^{\ast }\left( x,y\right) $, that minimize the system OP, subject to
a (long-term) average power constraint at the relay, $P_{avg}$, and fixed
output powers at the end nodes, $P_{S1}$ and $P_{S2}$. The OPA is specified
in the following theorem.
\begin{theorem}
For the system OP defined as in (\ref{4}), the solution of the optimization
problem
\begin{equation*}
\underset{P_{R}\left( x,y\right) }{\text{minimize}}\text{ }P_{out}
\end{equation*}%
\begin{equation}
\text{subject to \ \ \ }E_{XY}\left[ P_{R}\left( x,y\right) \right] \leq
P_{avg}  \label{5}
\end{equation}%
where $E_{XY}\left[ \cdot \right] $ denotes expectation with respect to
random variables $X$ and $Y$, is given by
\end{theorem}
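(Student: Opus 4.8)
The plan is to recognize this as a constrained functional optimization and attack it with the standard "threshold/cutoff water-filling" technique used for truncated channel inversion. First I would derive, for a fixed channel state $(x,y)$, the \emph{minimum} relay power $P_{R,\min}(x,y)$ that simultaneously satisfies $\gamma_2 \geq \delta_1$ and $\gamma_1 \geq \delta_2$. From (\ref{2}), the constraint $\gamma_2 \geq \delta_1$ rearranges into a linear inequality in $P_R$ of the form $P_R\,[P_{S1}x\,y - \delta_1 y] \geq \delta_1(P_{S1}x + P_{S2}y + 1)$, which is feasible only when $P_{S1}x > \delta_1$, and then yields a lower bound $P_R \geq f_1(x,y)$; symmetrically $\gamma_1 \geq \delta_2$ is feasible only when $P_{S2}y > \delta_2$ and gives $P_R \geq f_2(x,y)$. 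Hence on the feasible set $\mathcal{F} = \{(x,y): P_{S1}x > \delta_1,\ P_{S2}y > \delta_2\}$ the minimum simultaneous power is $P_{R,\min}(x,y) = \max\{f_1(x,y), f_2(x,y)\}$, and outside $\mathcal{F}$ no finite power avoids outage, so the relay should be silent there regardless.

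Next I would argue that any optimal policy, on the set where it transmits at all, must transmit exactly $P_{R,\min}(x,y)$: spending more power on a state already out of outage is wasteful, and spending a positive but insufficient amount buys nothing. So the decision reduces to choosing a subset $\mathcal{A} \subseteq \mathcal{F}$ of states on which the relay is active, transmitting $P_{R,\min}(x,y)$ there and $0$ elsewhere; the objective becomes $P_{out} = 1 - \Pr\{(X,Y) \in \mathcal{A}\}$ and the constraint becomes $\int_{\mathcal{A}} P_{R,\min}(x,y)\, p_{XY}(x,y)\, dx\, dy \leq P_{avg}$. This is a fractional-knapsack / Neyman–Pearson–type problem: to maximize the probability mass collected under a budget, one should include states in increasing order of their "price per unit probability," i.e. in increasing order of $P_{R,\min}(x,y)$. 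Therefore the optimal $\mathcal{A}$ is a sublevel set $\{(x,y) \in \mathcal{F} : P_{R,\min}(x,y) \leq \lambda\}$ for a cutoff $\lambda$ chosen so the average-power constraint holds with equality (or $\lambda = \infty$ if the unconstrained policy already fits the budget). I would make the optimality rigorous either by a direct exchange argument (swapping an included high-price state for an excluded lower-price one strictly improves the objective or relaxes the constraint) or via a Lagrangian: form $\mathcal{L} = P_{out} + \mu(E_{XY}[P_R] - P_{avg})$, note the integrand decouples pointwise in $(x,y)$, and minimize state by state — the state-$(x,y)$ contribution is $-p_{XY}(x,y)\mathbf{1}[\text{active}] + \mu\, p_{XY}(x,y) P_{R,\min}(x,y)\mathbf{1}[\text{active}]$ on $\mathcal{F}$, which is made negative (hence "active" is chosen) precisely when $P_{R,\min}(x,y) < 1/\mu =: \lambda$.

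Assembling these pieces, the claimed optimal strategy is
\begin{equation*}
P_R^{\ast}(x,y) = \begin{cases} \max\{f_1(x,y),\, f_2(x,y)\}, & (x,y)\in\mathcal{F} \text{ and } \max\{f_1,f_2\} \leq \lambda,\\[1mm] 0, & \text{otherwise,}\end{cases}
\end{equation*}
with $\lambda$ the unique cutoff satisfying $E_{XY}[P_R^{\ast}(x,y)] = P_{avg}$ (and $\lambda=\infty$ if even then the integral stays below $P_{avg}$). I expect the main obstacle to be twofold: first, the bookkeeping in establishing $P_{R,\min}$ cleanly — getting the feasibility conditions $P_{S1}x>\delta_1$, $P_{S2}y>\delta_2$ right and writing $f_1,f_2$ in closed form from (\ref{2})–(\ref{3}) — and second, making the "greedy sublevel-set is optimal" step airtight, since this is a nonconvex combinatorial-flavored problem over indicator functions; the Lagrangian/Neyman–Pearson argument handles it, but one must check that the cutoff $\lambda$ exists and that ties (the set $P_{R,\min}(x,y) = \lambda$) have measure zero under the continuous exponential distributions, so no randomization at the boundary is needed.
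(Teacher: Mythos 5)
Your proposal is correct and follows essentially the same route as the paper's Appendix A: first show that wherever the optimal policy transmits it must use exactly the minimum zero-outage power $P_{R,st}(x,y)$ (the paper's Proposition 1), then show by an exchange argument that the active set must be the sublevel set $\{P_{R,st}(x,y)\leq\rho\}$, since trading probability mass where $P_{R,st}\leq\rho$ for mass where $P_{R,st}>\rho$ under a fixed power budget can only increase the outage probability (the paper's Proposition 2). Your Lagrangian/Neyman--Pearson phrasing and your caveats about existence of the cutoff and measure-zero ties are sensible refinements, but the substance matches the paper's proof.
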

\begin{equation}
P_{R}^{\ast }\left( x,y,\rho \right) =\left\{
\begin{array}{c}
P_{R,st}\left( x,y\right) ,\text{ \ \ if }P_{R,st}\left( x,y\right) \leq \rho
\\
0,\text{ \ \ \ \ \ \ \ \ \ \ \ \ \ \ \ if }P_{R,st}\left( x,y\right) >\rho%
\end{array}%
\right.  \label{6}
\end{equation}%
where $P_{R,st}\left( x,y\right) $ is the minimum short-term relay power
that maintains zero OP, and $\rho $ is the cutoff threshold determined from
\begin{equation}
P_{avg}=E_{XY}\left[ P_{R,st}\left( x,y\right) \text{ }|P_{R,st}\left(
x,y\right) \leq \rho \right] \text{ .}  \label{7}
\end{equation}
\begin{proof}
The proof is presented in the Appendix.
\end{proof}
Intuitively, the solution in (\ref{6}) resembles the truncated channel
inversion in point-to-point communication links \cite{R12}. The minimum
possible short-term power $P_{R,st}\left( x,y\right)$ prevents
system outage events to the greatest possible extent, such that the average
relay output power is below the predefined value $P_{avg}$. The cutoff
threshold $\rho$ assures that the long-term average power constraint is
satisfied, such that the relay is silent if $P_{R,st}\left( x,y\right) $
exceeds $\rho $. In the following subsections, we determine $P_{R,st}\left(
x,y\right) $ and $\rho $.

\subsection{Minimum Short-term Power Required for Zero Outage}
The power control scheme that maintains zero outage probability and also
minimizes the relay output power is determined according to the following
theorem.
\begin{theorem}
The solution of optimization problem%
\begin{equation*}
\text{$\underset{P_{R}\left( x,y\right) }{\text{minimize}}$ }P_{R}\left(
x,y\right)
\end{equation*}%
\vspace{-5mm}
\begin{eqnarray}
\text{subject to \ \ }\frac{1}{2}\log _{2}\left( 1+\gamma _{2}\right) &\geq
&R_{01}  \notag \\
\frac{1}{2}\log {}_{2}\left( 1+\gamma _{1}\right) &\geq &R_{02}  \label{8}
\end{eqnarray}%
is given by
\begin{equation}
P_{R,st}\left( x,y\right) =\left\{
\begin{array}{c}
\max \left\{ \frac{\delta _{1}\left( 1+P_{S1}x+P_{S2}y\right) }{y\left(
P_{S1}x-\delta _{1}\right) },\frac{\delta _{2}\left(
1+P_{S1}x+P_{S2}y\right) }{x\left( P_{S2}y-\delta _{2}\right) }\right\}
\text{, } \\
\text{if }x\geq \frac{\delta _{1}}{P_{S1}}\text{ and }y\geq \frac{\delta _{2}%
}{P_{S2}} \\
\\
0,\text{ otherwise}%
\end{array}%
\right. .  \label{9}
\end{equation}
\end{theorem}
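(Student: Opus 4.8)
The plan is to solve the optimization problem in (\ref{8}) directly by unpacking the two rate constraints into explicit inequalities on $P_R(x,y)$, and then observing that $P_R$ appears monotonically in each. Since we want to \emph{minimize} $P_R(x,y)$ pointwise (there is no coupling across channel states here — this is the short-term problem), it suffices to find, for each fixed $(x,y)$, the smallest value of $P_R$ that simultaneously satisfies both constraints.

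\textbf{Step 1: Rewrite each rate constraint.} The constraint $\frac12\log_2(1+\gamma_2)\ge R_{01}$ is equivalent to $\gamma_2\ge\delta_1$ with $\delta_1=2^{2R_{01}}-1$, and similarly $\gamma_1\ge\delta_2$ with $\delta_2=2^{2R_{02}}-1$. Substituting (\ref{2}) for $\gamma_2$, the inequality $\gamma_2\ge\delta_1$ becomes
\begin{equation*}
P_{S1}x\,P_R\,y\ \ge\ \delta_1\bigl(P_{S1}x+(P_{S2}+P_R)y+1\bigr),
\end{equation*}
i.e. $P_R\,y\,(P_{S1}x-\delta_1)\ge\delta_1(1+P_{S1}x+P_{S2}y)$. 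The key observation is the sign of the coefficient $P_{S1}x-\delta_1$: if $P_{S1}x\le\delta_1$ (equivalently $x<\delta_1/P_{S1}$), the left side is nonpositive for every $P_R\ge0$ while the right side is strictly positive, so \emph{no finite $P_R$ can satisfy the constraint} — this gives the ``outage unavoidable'' regime and hence $P_{R,st}=0$ by convention (the relay is silent). If instead $x\ge\delta_1/P_{S1}$, we may divide to obtain the lower bound $P_R\ge\delta_1(1+P_{S1}x+P_{S2}y)/\bigl(y(P_{S1}x-\delta_1)\bigr)$. An identical manipulation of $\gamma_1\ge\delta_2$ using (\ref{3}) yields the feasibility condition $y\ge\delta_2/P_{S2}$ and the lower bound $P_R\ge\delta_2(1+P_{S1}x+P_{S2}y)/\bigl(x(P_{S2}y-\delta_2)\bigr)$.

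\textbf{Step 2: Combine the two bounds.} When both feasibility conditions $x\ge\delta_1/P_{S1}$ and $y\ge\delta_2/P_{S2}$ hold, any feasible $P_R$ must exceed both lower bounds, so the smallest feasible value is the maximum of the two, which is exactly the expression displayed in (\ref{9}). When either feasibility condition fails, the corresponding constraint is violated for all $P_R\ge0$; the problem is infeasible in the strict sense, so we set $P_{R,st}(x,y)=0$, consistent with the ``otherwise'' branch and with the interpretation that the relay conserves power in states where zero outage is impossible regardless of relay power.

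\textbf{The main obstacle} is not analytic difficulty but the careful bookkeeping of the sign of $P_{S1}x-\delta_1$ and $P_{S2}y-\delta_2$ when clearing denominators: one must verify that these quantities being positive is both necessary (otherwise the strictly positive right-hand side cannot be matched) and sufficient (the bound is then finite and achievable with equality), and one should note that when $P_{S1}x-\delta_1>0$ the expression $\delta_1(1+P_{S1}x+P_{S2}y)/\bigl(y(P_{S1}x-\delta_1)\bigr)$ is indeed well-defined and positive for $y>0$. A minor point worth checking is that at the boundary $x=\delta_1/P_{S1}$ the required power diverges, so the open/closed nature of the feasibility region is immaterial for the OP computation since that event has zero probability under the continuous fading model. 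Once these sign considerations are handled, (\ref{9}) follows immediately.
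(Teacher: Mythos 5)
Your proposal is correct and follows essentially the same route as the paper: both unpack the two rate constraints into the lower bounds $P_R\geq \delta_1(1+P_{S1}x+P_{S2}y)/\bigl(y(P_{S1}x-\delta_1)\bigr)$ and $P_R\geq \delta_2(1+P_{S1}x+P_{S2}y)/\bigl(x(P_{S2}y-\delta_2)\bigr)$ together with the feasibility conditions $P_{S1}x\geq\delta_1$, $P_{S2}y\geq\delta_2$, take the maximum of the two bounds on the feasible set, and set $P_{R,st}=0$ where the constraints cannot be met. Your treatment of the sign of the denominators and the measure-zero boundary is if anything slightly more careful than the paper's, which merely frames the problem as a linear program before performing the identical algebra.
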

\begin{proof}
Optimization problem (\ref{8}) is a standard \textit{linear programming}
(LP) problem, whose solution is feasible because the intersection of the
constraints in (\ref{8}) is a non-empty set, and the solution lies at the
boundary of the intersection. Considering (\ref{2}), the first constraint
in (\ref{8}) is satisfied if the output powers of the relay and
the end node $S_{1}$ satisfy the following conditions:
\begin{equation}
P_{R}\geq \frac{\delta _{1}\left( 1+P_{S1}x+P_{S2}y\right) }{y\left(
P_{S1}x-\delta _{1}\right) }\text{ \ and \ }P_{S1}x-\delta _{1}\geq 0\text{.}
\label{10}
\end{equation}
The second constraint in (\ref{8}) is satisfied if the output powers of the
relay and the end node $S_{2}$ satisfy the following conditions:
\begin{equation}
P_{R}\geq \frac{\delta _{2}\left( 1+P_{S1}x+P_{S2}y\right) }{x\left(
P_{S2}y-\delta _{2}\right) }\text{ \ and \ }P_{S2}y-\delta _{2}\geq 0\text{.}
\label{11}
\end{equation}
Then, the intersection of (\ref{10}) and (\ref{11}) is given by (\ref{9}).
If either one of the conditions $P_{S1}x-\delta _{1}\geq 0$ and $%
P_{S2}y-\delta _{2}\geq 0$ is not satisfied, then the intersection is an
empty set, which practically means that the relay should be silent, i.e., $%
P_{R,st}\left( x,y\right) =0$. In this case, an outage event is unavoidable
regardless of the available short-term power at the relay. This concludes
the proof.
\end{proof}

\begin{figure}[tbp]
\centering
\includegraphics[width=3.5in]{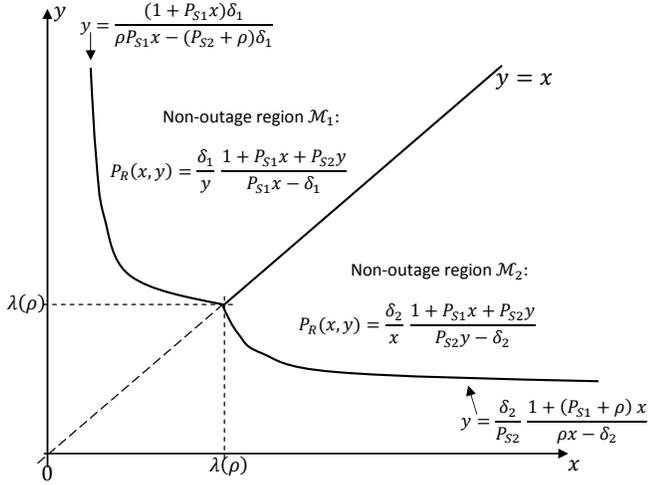} \vspace{-5mm} \vspace{-3mm}
\caption{Non-outage region $\mathcal{M} = \mathcal{M}_{1} \cup \mathcal{M}%
_{2}$.}
\label{fig1}
\end{figure}

\subsection{Average Relay Output Power and Cutoff Threshold $\protect\rho $}
The combination of (\ref{6}) and (\ref{9}) is the general analytical
solution to the considered outage minimization problem (\ref{5}). In order
to be able to obtain $\rho $ analytically, it is necessary to derive an
analytical expression for the average relay output power, $\overline{P}%
_{R}=E_{XY}\left[ P_{R,st}\left( x,y\right) \text{ }|P_{R,st}\left(
x,y\right) \leq \rho \right] $, where $P_{R,st}\left( x,y\right) $ is given
by (\ref{9}). Although possible, arriving at a general analytic expression for
$\overline{P}_{R}$ for arbitrary fixed output powers at the end nodes and
arbitrary rates requires lengthy derivations due to the complicated system
non-outage regions. Thus, due to the space restriction, here we focus on
the following special case:
\begin{equation}
\frac{P_{S1}}{\delta _{1}}=\frac{P_{S2}}{\delta _{2}}\text{.}  \label{12}
\end{equation}
Assuming (\ref{12}), the system non-outage region $\mathcal{M}=\left\{
\left( x,y\right) \left\vert P_{S1}x\geq \delta _{1},P_{S2}y\geq \delta
_{2},P_{R,st}\left( x,y\right) \leq \rho \right. \right\} $ is divided into
two non-overlapping regions, $\mathcal{M}=\mathcal{M}_{1}\cup \mathcal{M}_{2}$,
such that
\begin{equation}
\mathcal{M}_{1}:\frac{\delta _{1}\left[ 1+\left( P_{S2}+\rho \right) y\right]
}{P_{S1}\left( \rho y-\delta _{1}\right) }\leq x\leq y\text{ and }y\geq
\lambda \left( \rho \right)  \label{13}
\end{equation}%
\begin{equation}
\mathcal{M}_{2}:\frac{\delta _{2}\left[ 1+\left( P_{S1}+\rho \right) x\right]
}{P_{S2}\left( \rho x-\delta _{2}\right) }\leq y\leq x\text{ and }x\geq
\lambda \left( \rho \right)  \label{14}
\end{equation}%
where
\begin{eqnarray}
\lambda \left( \rho \right) &=&\frac{\delta _{2}\left( P_{S1}+P_{S2}+\rho
\right) }{2P_{S2}\rho }  \notag \\
&&\times \left( 1+\sqrt{1+\frac{4P_{S2}\rho }{\delta _{2}\left(
P_{S1}+P_{S2}+\rho \right) ^{2}}}\right) \text{.}  \label{15}
\end{eqnarray}%
Note that $\lambda \left( \rho \right) >P_{S1}/\delta _{1}$. Fig. 1
graphically illustrates the non-outage regions $\mathcal{M}_{1}$ and $%
\mathcal{M}_{2}$. Considering (\ref{13})-(\ref{15}), the average output
power of the power control scheme (\ref{9}) is determined as
\begin{eqnarray}
\overline{P}_{R} &=&E_{XY}\left[ P_{R,st}\left( x,y\right) \text{ }%
|P_{R,st}\left( x,y\right) \leq \rho \right]  \notag \\
&=&\int_{\lambda \left( \rho \right) }^{\infty }\int_{\frac{\delta _{1}\left[
1+\left( P_{S2}+\rho \right) y\right] }{P_{S1}\left( \rho y-\delta
_{1}\right) }}^{y}dxdyf_{X}\left( x\right) f_{Y}\left( y\right)  \notag \\
&&\times \frac{\delta _{1}\left( 1+P_{S1}x+P_{S2}y\right) }{y\left(
P_{S1}x-\delta _{1}\right) }  \notag \\
&&+\int_{\lambda \left( \rho \right) }^{\infty }\int_{\frac{\delta _{2}\left[
1+\left( P_{S1}+\rho \right) x\right] }{P_{S2}\left( \rho x-\delta
_{2}\right) }}^{x}dydxf_{X}\left( x\right) f_{Y}\left( y\right)  \notag \\
&&\times \frac{\delta _{2}\left( 1+P_{S1}x+P_{S2}y\right) }{x\left(
P_{S2}y-\delta _{2}\right) }\text{.}  \label{16}
\end{eqnarray}
The inner integral in (\ref{16}) can be solved in closed-form using the
exponential integral function $E_{1}\left( \cdot \right) $ \cite{R14}, which
is omitted here due to space limitation. Therefore, (\ref{16}) can be
expressed as a single integral, which depends on the cutoff threshold $\rho$.
Thus, $\rho $ can be determined numerically for a given $P_{avg}$.

\subsection{Optimum Power Allocation and Minimum Outage Probability}
Exploiting (\ref{13})-(\ref{15}), we combine (\ref{6}) and (\ref{9}) to
arrive at the final OPA expression
\begin{equation} \label{17}
P_{R}^{\ast }\left( x,y,\rho \right) = \qquad \qquad \qquad\qquad\qquad\qquad\qquad\qquad \,
\end{equation}
\vspace{-5mm}
\begin{eqnarray*}
\left\{
\begin{array}{c}
\frac{\delta _{1}\left( 1+P_{S1}x+P_{S2}y\right) }{y\left( P_{S1}x-\delta
_{1}\right) }\text{, if }\frac{\delta _{1}\left[ 1+\left( P_{S2}+\rho
\right) y\right] }{P_{S1}\left( \rho y-\delta _{1}\right) }\leq x\leq y\text{
and }y\geq \lambda \left( \rho \right) \\
\frac{\delta _{2}\left( 1+P_{S1}x+P_{S2}y\right) }{x\left( P_{S2}y-\delta
_{2}\right) }\text{, if }\frac{\delta _{2}\left[ 1+\left( P_{S1}+\rho
\right) x\right] }{P_{S2}\left( \rho x-\delta _{2}\right) }\leq y\leq x\text{
and }x\geq \lambda \left( \rho \right) \\
0\qquad \qquad \qquad \qquad ,\text{ otherwise}\qquad \qquad \qquad \qquad
\qquad
\end{array}
\right.
\end{eqnarray*}
Assuming power allocation (\ref{17}), the system OP\ (\ref{4}) is determined
as $P_{out}=1-\Pr \left\{ (x,y)\in \mathcal{M}\right\} $, i.e.,

\begin{eqnarray*}
P_{out}\left( \rho \right) &=&1-\int_{\lambda \left( \rho \right) }^{\infty
}\int_{\frac{\delta _{1}\left[ 1+\left( P_{S2}+\rho \right) y\right] }{%
P_{S1}\left( \rho y-\delta _{1}\right) }}^{y}f_{X}\left( x\right)
f_{Y}\left( y\right) dxdy\qquad \, \\
&&-\int_{\lambda \left( \rho \right) }^{\infty }\int_{\frac{\delta _{2}\left[
1+\left( P_{S1}+\rho \right) x\right] }{P_{S2}\left( \rho x-\delta
_{2}\right) }}^{x}f_{X}\left( x\right) f_{Y}\left( y\right) dydx\qquad
\qquad \qquad \\
\end{eqnarray*}
\vspace{-10mm}
\begin{eqnarray}
=1+e^{-\lambda \left( \frac{1}{\Omega _{X}}+\frac{1}{\Omega _{Y}}\right)
}\qquad \qquad \qquad \qquad \qquad \qquad \qquad \quad  \notag \\
-\frac{1}{\Omega _{Y}}\int_{\lambda }^{\infty }\exp \left[ -\left( \frac{z}{%
\Omega _{Y}}+\frac{\delta _{1}}{\Omega _{X}P_{S1}}\frac{1+\left( P_{S2}+\rho
\right) }{\rho z-\delta _{1}}\right) \right] dz  \notag \\
-\frac{1}{\Omega _{X}}\int_{\lambda }^{\infty }\exp \left[ -\left( \frac{z}{%
\Omega _{X}}+\frac{\delta _{2}}{\Omega _{Y}P_{S2}}\frac{1+\left( P_{S1}+\rho
\right) }{\rho z-\delta _{2}}\right) \right] dz\text{.}  \label{18}
\end{eqnarray}
For a given $P_{avg}$, the cutoff threshold $\rho$ is determined from
(\ref{16}), and (\ref{18}) leads to the minimum OP of the considered
bidirectional relaying system. However, note that (\ref{18}) is also the
system OP for a relay with constant output power set as $P_{R}\left(
x,y,\rho \right) =\rho $. For $\rho \rightarrow \infty $ (i.e.,
$P_{avg}\rightarrow \infty $), (\ref{18}) attains its minimum value given by
\begin{equation}
P_{out}^{\min }=1-e^{-\frac{\delta _{1}}{P_{S1}}\left( \frac{1}{\Omega_{X}}+
\frac{1}{\Omega _{Y}}\right) }\text{. }  \label{19}
\end{equation}
Namely, regardless of the available relay power, the outages are imminent if
the channel between the originating end node and the relay cannot support the
desired rate. We again note that (\ref{16})-(\ref{19}) are valid under assumption
(\ref{12}).

\subsection{Minimization of Average Relay Power}
The following theorem determines the solution of the dual optimization
problem of (\ref{5}), which minimizes the average relay output power subject
to some target system OP, $P_{out}^{\max}$.
\begin{theorem}
For system OP $P_{out}$ defined as per (\ref{4}), the solution of
optimization problem
\begin{equation*}
\text{$\underset{P_{R}\left( x,y\right) }{\text{minimize}}$ \ }E_{XY}\left[
P_{R}\left( x,y\right) \right]
\end{equation*}%
\vspace{-5mm}
\begin{equation}
\text{subject to \ \ }P_{out}\leq P_{out}^{\text{target}}\text{,}  \label{20}
\end{equation}
is given by \vspace{-3mm}
\begin{equation}
P_{R}^{\ast \ast }\left( x,y,\mu \right) =\left\{
\begin{array}{c}
P_{R,st}\left( x,y\right) ,\text{ \ \ if }P_{R,st}\left( x,y\right) \leq \mu
\\
0,\text{ \ \ \ \ \ \ \ \ \ \ \ \ \ \ \ if }P_{R,st}\left( x,y\right) >\mu%
\end{array}
\right. ,  \label{21}
\end{equation}
where $P_{R,st}\left( x,y\right) $ is given by (\ref{9}), whereas the cutoff
threshold $\mu $ satisfies
\begin{equation}
P_{out}\left( \mu \right) =P_{out}^{\text{target}}\text{,}  \label{22}
\end{equation}
where $P_{out}\left( \mu \right) $ given by (\ref{18}).
\end{theorem}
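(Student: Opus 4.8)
The plan is to obtain this result as the constraint--objective dual of Theorem~1, so that the argument runs parallel to (and reuses) the one given in the Appendix for Theorem~1. First I would record the structural fact inherited from Theorem~3: on the feasible channel region $\mathcal{R}=\{x\geq\delta_1/P_{S1},\,y\geq\delta_2/P_{S2}\}$ the power $P_{R,st}(x,y)$ of (\ref{9}) is the \emph{smallest} instantaneous relay power that keeps both sessions out of outage, while for $(x,y)\notin\mathcal{R}$ no finite relay power avoids an outage. Hence, for any admissible $P_R(x,y)$: a channel state contributes to the non-outage event only if $P_R(x,y)\geq P_{R,st}(x,y)$; on a contributing state any excess over $P_{R,st}(x,y)$ is pure waste; and on a non-contributing state any positive power is pure waste. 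Therefore the minimizer of (\ref{20}) may be taken in the bang--bang family $P_R(x,y)\in\{0,\,P_{R,st}(x,y)\}$, and is then parametrised solely by the active set $\mathcal{A}\subseteq\mathcal{R}$, with $P_{out}=1-\Pr\{(x,y)\in\mathcal{A}\}$ and $E_{XY}[P_R(x,y)]=E_{XY}[P_{R,st}(x,y)\,\mathbf{1}\{(x,y)\in\mathcal{A}\}]$.

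Problem (\ref{20}) thus becomes: minimise $E_{XY}[P_{R,st}\,\mathbf{1}_{\mathcal{A}}]$ over sets $\mathcal{A}$ with $\Pr\{\mathcal{A}\}\geq 1-P_{out}^{\text{target}}$. By the bathtub principle (equivalently, a Neyman--Pearson exchange argument) the optimal $\mathcal{A}$ is a sublevel set of the ``cost density'' $P_{R,st}(x,y)$, i.e. $\mathcal{A}^{\star}=\{(x,y):P_{R,st}(x,y)\leq\mu\}$ for the smallest $\mu$ with $\Pr\{P_{R,st}(x,y)\leq\mu\}\geq 1-P_{out}^{\text{target}}$; in terms of (\ref{18}) this is exactly the condition $P_{out}(\mu)=P_{out}^{\text{target}}$. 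The exchange step is routine: for any competing $\mathcal{B}$ with $\Pr\{\mathcal{B}\}\geq 1-P_{out}^{\text{target}}$, transporting probability mass from $\mathcal{B}\setminus\mathcal{A}^{\star}$, where $P_{R,st}>\mu$, onto $\mathcal{A}^{\star}\setminus\mathcal{B}$, where $P_{R,st}\leq\mu$, keeps the covered probability at least $1-P_{out}^{\text{target}}$ and does not raise the expected power, whence $E_{XY}[P_{R,st}\,\mathbf{1}_{\mathcal{B}}]\geq E_{XY}[P_{R,st}\,\mathbf{1}_{\mathcal{A}^{\star}}]$. Inserting $\mathcal{A}^{\star}$ into the bang--bang form yields precisely (\ref{21}).

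Finally I would verify that (\ref{22}) is solvable: the OP in (\ref{18}) is continuous and strictly decreasing in $\mu$, rising towards $1$ as $\mu$ shrinks to the value at which $\mathcal{M}$ degenerates and falling to the floor $P_{out}^{\min}$ of (\ref{19}) as $\mu\to\infty$; hence a (unique) cutoff $\mu$ realising any $P_{out}^{\text{target}}\in[P_{out}^{\min},1)$ exists, while larger targets leave the constraint inactive and $P_{out}^{\text{target}}<P_{out}^{\min}$ is infeasible. The one step that needs care is the first one --- justifying that restricting to the bang--bang family $P_R(x,y)\in\{0,P_{R,st}(x,y)\}$ costs no optimality --- since that is where Theorem~3 is actually invoked; once that reduction is in hand, the selection step is the same monotone/bathtub reasoning that produces truncated channel inversion \cite{R12} and that underlies the proof of Theorem~1.
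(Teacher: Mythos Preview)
Your argument is correct and follows essentially the same two-step skeleton as the paper's own (sketched) proof in Appendix~B: first reduce to the bang--bang family $P_R\in\{0,P_{R,st}\}$ via the argument of Proposition~1, then show that among such policies the threshold set $\{P_{R,st}\leq\mu\}$ is optimal by an exchange/contradiction argument mirroring Proposition~2 with the roles of objective and constraint swapped. Your presentation is somewhat more explicit---you name the bathtub/Neyman--Pearson principle directly and add the solvability check for~(\ref{22}), which the paper omits---but the underlying idea is the same. One labelling slip: where you write ``inherited from Theorem~3'' and ``where Theorem~3 is actually invoked'', you mean Theorem~2 (the characterisation of $P_{R,st}$); Theorem~3 is the statement you are proving.
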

\begin{proof}
A sketch of the proof is provided in Appendix B.
\end{proof}

\begin{figure}[tbp]
\centering
\includegraphics[width=3.5in]{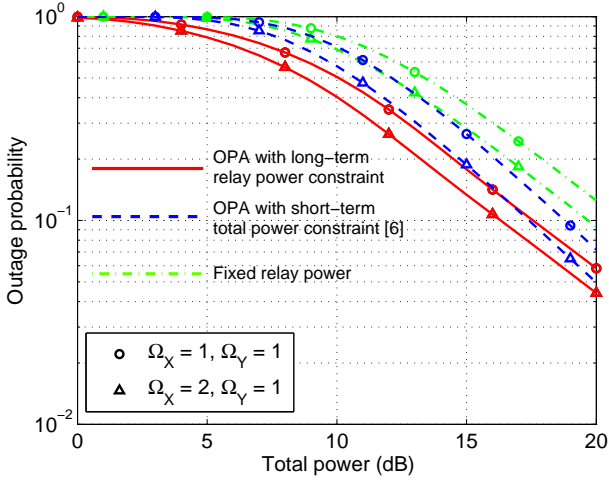} \vspace{-7mm}
\caption{Outage performance improvement due to power allocation.} \vspace{-4mm}
\label{fig3}
\end{figure}
\begin{figure}[tbp]
\centering
\includegraphics[width=3.5in]{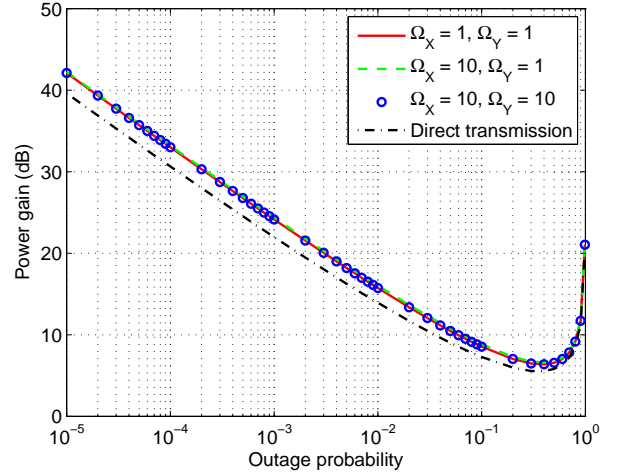} \vspace{-7mm}
\caption{Power savings at the end nodes and the relay.} \vspace{-4mm}
\label{fig4}
\end{figure}

\section{Numerical Examples}
We illustrate the performance gains of the proposed OPA for the considered
three-node AF relaying system under two different scenarios. For both
scenarios, the rates are fixed to $R_{01}=R_{02}=1/2$.

\textbf{Scenario 1}: We compare the system OP of the proposed OPA with:
(\textit{i}) fixed power allocation (FPA), and (\textit{ii}) OPA with
short-term total power constraint as derived in \cite{R6}, denoted as short-term OPA.
Note, for each channel realization, the short-term OPA optimally allocates
the same amount of the total available power to the relay and the two end nodes.
For a given total available power $P_{T}$, the system using the proposed OPA employs $%
P_{S1}=P_{S2}=P_{avg}=P_{T}/3$; the system using FPA employs $%
P_{S1}=P_{S2}=P_{R}=P_{T}/3$; the system using the short-term
OPA employs [6, Es. (15)-(17)]: $P_{S1}=0.5P_{T}\sqrt{y}/\left( \sqrt{x}+%
\sqrt{y}\right) $, $P_{S2}=0.5P_{T}\sqrt{x}/\left( \sqrt{x}+\sqrt{y}\right)$,
and $P_{R}=0.5P_{T}$.

According to Fig. 2, the proposed OPA leads to a significant OP improvement
relative to FPA for any given $P_{T}$. In each coding block, the OPA
scheme allocates just enough power to the relay so as to maintain the
desired rates, and the relay is silent when ``deep fades" occur. On the other
hand, FPA always spends the same power in each coding block regardless of
the channel state. The proposed OPA also performs better than the short-term
OPA in the considered $P_{T}$ region, although the OP gain diminishes with
increasing $P_{T}$. Clearly, the short-term OPA outperforms the proposed OPA
above a certain $P_{T}$, because the short-term OPA employs a global short-term
power constraint, i.e., the sums of all node powers is constrained, whereas we employ
individual power constraints and the end nodes transmit with fixed power.

\textbf{Scenario 2}: For a given system OP, we consider the \textit{power
gain at the relay} for OPA $P_{R}^{\ast \ast }\left( x,y,\mu \right) $ as
per (\ref{21}), relative to FPA $P_{R\text{,fixed}}$, such that the power
gain is defined as $P_{R\text{,fixed}}/E_{XY}\left[ P_{R}^{\ast \ast }\left(
x,y,\mu \right) \right] $. Both OPA and FPA\ lead to the same OP ($P_{out}^{%
\text{target}}$) by setting $P_{R\text{,fixed}}=\mu $ such that $\mu $ is
determined from (\ref{22}). We also set $P_{S1}=P_{S2}=\mu $. According to
Fig. 3, the power gains are remarkably high when the OP is low, because
channel inversion is applied to almost all channel states ($\rho $ has high
value). For relatively high OPs (OP between 0.3 and 0.7), the power gain is
minimized (but is still above 5 dB), because the nodes are often silent
although channel states are not exposed to ``deep fades". For comparison
purposes, the dotted line in Fig. 3 denotes the power gain for truncated
channel inversion over a point-to-point communication link in Rayleigh
fading, which can be shown to be \cite{R12}
\begin{equation}
\frac{1}{-\log \left( 1-P_{out}^{\text{target}}\right) \cdot E_{1}\left(
-\log \left( 1-P_{out}^{\text{target}}\right) \right) }\text{.}  \label{23}
\end{equation}

\section{Conclusion}
In this paper, we determined the optimal power allocation at the relay
that minimizes the OP of a conventional dual-hop bidirectional AF relaying
system with fixed rates, subject to a long-term power constraint at the relay.
The end nodes are assumed to be simple communication devices that cannot
adapt their output powers. The general solution resembles the truncated channel inversion
scheme for point-to-point links. For a special case, we have also derived analytical
expressions that allow the evaluation of the cutoff threshold and the OP. The proposed scheme
achieves remarkable performance improvements and/or power savings. Most importantly,
these benefits come without additional cost for the system, because the required CSI has to be
acquired by the relay anyways for adjusting its amplification.

\appendices

\section{Proof of Theorem 1}

This proof is inspired by \cite{R13}. Theorem 1 is true if the following two propositions are true:
\textbf{Proposition 1}: If $P_{R}^{\ast }(x,y)>0$, then $P_{R}^{\ast}(x,y)=P_{R,st}(x,y)$,
\textbf{Proposition 2}: $P_{R}^{\ast }(x,y)>0$ if and only if $P_{R,st}(x,y)\leq \rho $.

\textbf{Proof of Proposition 1}: We prove Proposition 1 by contradiction.
Assume that Proposition 1 is not true and that $P_{R}^{+}(x,y)$ is the
optimal power. Then,  $P_{R}^{+}(x,y)$ differs from $P_{R}^{*}(x,y)$ if for some or all $(x,y)$, $P_{R}^{+}(x,y)$ is such that $P_{R}^{+}(x,y)>0$ and $%
P_{R}^{+}(x,y)\neq P_{R,st}(x,y)$. Let $\mathcal{S}$ represent the set of $%
(x,y)$ points for which $P_{R}^{+}(x,y)>0$ and $P_{R}^{+}(x,y)\neq
P_{R,st}(x,y)$. Then, for the points $(x,y)\in \mathcal{S}$, there are three
possibilities for the values of $P_{R}^{+}(x,y)$. Either $%
P_{R}^{+}(x,y)<P_{R,st}(x,y)$ for all $(x,y)\in \mathcal{S}$, or $%
P_{R}^{+}(x,y)>P_{R,st}(x,y)$ for all $(x,y)\in \mathcal{S}$, or the set $%
\mathcal{S}$ is comprised of two sets $\mathcal{S}_{1}$ and $\mathcal{S}_{2}$%
, where for $(x,y)\in \mathcal{S}_{1}$, $P_{R}^{+}(x,y)<P_{R,st}(x,y)$ and
for $(x,y)\in \mathcal{S}_{2}$, $P_{R}^{+}(x,y)>P_{R,st}(x,y)$. We will
prove that the first two possibilities are not possible and, as a result,
the third possibility is also not possible, therefore $%
P_{R}^{+}(x,y)=P_{R,st}(x,y)=P_{R}^{\ast }(x,y)$ for all $(x,y)$ for which $%
P_{R}^{+}(x,y)>0$.

Assume the first possibility, i.e., for $(x,y)\in \mathcal{S}$, $%
P_{R}^{+}(x,y)<P_{R,st}(x,y)$. Then, $P_{R}^{+}(x,y)$ cannot be the optimal
solution since for the considered $(x,y)\in \mathcal{S}$ there is an outage
and therefore the outage probability will not change if $P_{R}^{+}(x,y)$ is
set to zero. Now assume the second possibility, i.e., for $(x,y)\in \mathcal{%
S}$, $P_{R}^{+}(x,y)>P_{R,st}(x,y)$. Then, $P_{R}^{+}(x,y)$ again cannot be
the optimal power solution since for $(x,y)\in \mathcal{S}$ there is no
outage and by setting $P_{R}^{+}(x,y)=P_{R,st}(x,y)$ the outage probability
will not change. Finally, the third possibility is a combination of the
first two and therefore cannot be true. Hence, Proposition 1 is true.

\textbf{Proof of Proposition 2}: Again, we prove Proposition 2 by
contradiction. Let $\mathcal{D}$ represent the set of points for which $%
P_{R,st}(x,y)\leq \rho $. Then, for the proposed optimal solution, given by (%
\ref{6}), the average power is given by
\begin{eqnarray}
E_{XY}\{P_{R}^{\ast }(x,y)\} =\int_{x=0}^{\infty
}\int_{y=0}^{\infty }P_{R}^{\ast }(x,y)f_{XY}(x,y)dxdy
\\
\hspace{-14mm}=\int \!\!\!\!\int_{(x,y)\in \mathcal{D}}\hspace{-6mm}
P_{R,st}(x,y)f_{XY}(x,y)dxdy=P_{avg}.\label{eq_2}
\end{eqnarray}
On the other hand, the outage probability is given by
\begin{equation} \label{eq_2a}
P_{out}^{\ast }=1-\int \!\!\!\!\int_{(x,y)\in \mathcal{D}}\hspace{-6mm}%
f_{XY}(x,y)dxdy,
\end{equation}
i.e., there is no outage for those $(x,y)$ for which the relay's power
$P_{R}(x,y)\geq P_{R,st}$, and this is satisfied only  for $(x,y)\in \mathcal{D}$, hence
comes (\ref{eq_2a}).

Now, let us assume that Proposition 2 is not true and that $P_{R}^{+}(x,y)$
is the optimal power. Since Proposition 1 is true, it follows that if the optimal power    $P_{R}^{+}(x,y)$ is nonzero for points $(x,y)$,  $P_{R}^{+}(x,y)=P_{R,st}(x,y)$ must hold. However, this is also true for $P_{R}^{*}(x,y)$, i.e., $P_{R}^{*}(x,y)=P_{R,st}(x,y)$ for $(x,y)\in \mathcal{D}$ and $P_{R}^{*}(x,y)=0$ for $(x,y)\notin \mathcal{D}$.
Hence, $P_{R}^{+}(x,y)$ will differ from $P_{R}^{\ast
}(x,y) $ if and only if for some or all points $(x,y)\notin \mathcal{D}$, $%
P_{R}^{+}(x,y)>0$ holds. Moreover, according to Preposition 1, since $P_{R}^{+}(x,y)>0$ then $P_{R}^{+}(x,y)$ must be $P_{R}^{+}(x,y)=P_{R,st}(x,y)$. Hence, $P_{R}^{+}(x,y)$ will differ from $P_{R}^{\ast
}(x,y) $ if for some (or all) $(x,y)\notin \mathcal{D}$, $P_{R}^{+}(x,y)=P_{R,st}(x,y)$ holds.

 Now, note that $P_{R,st}(x,y)\leq \rho $ if and only if $(x,y)\in \mathcal{D}$. As a result,  for $(x,y)\notin \mathcal{D}$, $P_{R,st}(x,y) $ must be $P_{R,st}(x,y)>\rho $.  Therefore, let us put these points $%
(x,y)\notin \mathcal{D}$ for which $P_{R}^{+}(x,y)=P_{R,st}(x,y)>\rho $
holds in the set $\mathcal{S}$.  Now, since $%
P_{R}^{+}(x,y)$ has to satisfy the power constraint it follows that the
following must hold:
\begin{eqnarray}
&&\hspace{-6mm}E_{XY}\{P_{R}^{+}(x,y)\}=\int_{x=0}^{\infty
}\int_{y=0}^{\infty }P_{R}^{+}(x,y)f_{XY}(x,y)dxdy  \notag  \label{eq_1} \\
&&\hspace{-4mm}=P_{avg}=\int \!\!\!\!\int_{(x,y)\in \mathcal{S}}\hspace{-8mm}%
P_{R}^{+}(x,y)f_{XY}(x,y)dxdy  \notag \\
&&\hspace{+8mm}+\int \!\!\!\!\int_{(x,y)\in \mathcal{D}}\hspace{-8mm}%
P_{R}^{+}(x,y)f_{XY}(x,y)dxdy.
\end{eqnarray}%
Since for $(x,y)\in \mathcal{S}$, $P_{R}^{+}(x,y)=P_{R,st}(x,y)>\rho$ we can express and denote the    first integral in (\ref{eq_1}) as
\begin{equation} \label{eq_2_2}
\int \!\!\!\!\int_{(x,y)\in \mathcal{S}}\hspace{-6mm}%
P_{R,st}(x,y)f_{XY}(x,y)dxdy =\epsilon,
\end{equation}
where $\epsilon >0$ must hold. Otherwise, if $\epsilon =0$ then $\mathcal{S}$
is an empty set in which case $P_{R}^{+}(x,y)=P_{R}^{\ast }(x,y)$. Combining
(\ref{eq_1}) and (\ref{eq_2_2}), the second integral in (\ref{eq_1}) can be
written as
\begin{equation} \label{eq_11}
\int \!\!\!\!\int_{(x,y)\in \mathcal{D}}\hspace{-8mm}%
P_{R}^{+}(x,y)f_{XY}(x,y)dxdy=P_{avg}-\epsilon .
\end{equation}
However, since (\ref{eq_2}) holds, in order for (\ref{eq_11}) to hold, it follows that for some points $%
(x,y)\in \mathcal{D}$, $P_{R}^{+}(x,y)$ has to be nonequal to $P_{R,st}(x,y)$. However, according to Preposition 1,  for any $(x,y)$, for which  $P_{R}^{+}(x,y)\neq P_{R,st}(x,y)$, $P_{R}^{+}(x,y)=0$ must hold. Hence, in order for (\ref{eq_11}) to hold, for some points $%
(x,y)\in \mathcal{D}$, $P_{R}^{+}(x,y)=0$.  Let us put the points $%
(x,y)\in \mathcal{D}$ for which $P_{R}^{+}(x,y)=0$ into the set $\mathcal{D}%
_{0}$ and the rest of the points in  $\mathcal{D}$ for which
$P_{R}^{+}(x,y)=P_{R,st}(x,y)$  in the set $\mathcal{D}_{1}$.  Therefore, (%
\ref{eq_11}) can be written as
\begin{eqnarray}
&&\hspace{-6mm}\int \!\!\!\!\int_{(x,y)\in \mathcal{D}}\hspace{-8mm}%
P_{R}^{+}(x,y)f_{XY}(x,y)dxdy=\int \!\!\!\!\int_{(x,y)\in \mathcal{D}_{0}}%
\hspace{-8mm}0\times f_{XY}(x,y)dxdy  \notag  \label{eq_n1} \\
&&\hspace{-3mm}+\int \!\!\!\!\int_{(x,y)\in \mathcal{D}_{1}}\hspace{-8mm}%
P_{R,st}(x,y)f_{XY}(x,y)dxdy  \notag \\
&&\hspace{-6mm}=\int \!\!\!\!\int_{(x,y)\in \mathcal{D}_{1}}\hspace{-8mm}%
P_{R,st}(x,y)f_{XY}(x,y)dxdy=P_{avg}-\epsilon .
\end{eqnarray}%
On the other hand, using $\mathcal{D}=%
\mathcal{D}_{0}\cup \mathcal{D}_{1}$, (\ref{eq_2}) can also be written  as
\begin{eqnarray}
&&\hspace{-6mm}\int \!\!\!\!\int_{(x,y)\in \mathcal{D}}\hspace{-8mm}%
P_{R,st}(x,y)f_{XY}(x,y)dxdy  \notag  \label{eq_n2} \\
&&\hspace{-6mm}=\int \!\!\!\!\int_{(x,y)\in \mathcal{D}_{0}}\hspace{-8mm}%
P_{R,st}(x,y)f_{XY}(x,y)dxdy  \notag \\
&&\hspace{-6mm}+\int \!\!\!\!\int_{(x,y)\in \mathcal{D}_{1}}\hspace{-8mm}%
P_{R,st}(x,y)f_{XY}(x,y)dxdy=P_{avg} .
\end{eqnarray}%
Subtracting (\ref{eq_2_2}) from (\ref{eq_n2}), we obtain
\begin{eqnarray}
&&\hspace{-6mm}\int \!\!\!\!\int_{(x,y)\in \mathcal{D}_{0}}\hspace{-8mm}%
P_{R,st}(x,y)f_{XY}(x,y)dxdy  \notag  \label{eq_n3} \\
&&\hspace{-6mm}+\int \!\!\!\!\int_{(x,y)\in \mathcal{D}_{1}}\hspace{-8mm}%
P_{R,st}(x,y)f_{XY}(x,y)dxdy  \notag \\
&&\hspace{-6mm}-\int \!\!\!\!\int_{(x,y)\in \mathcal{S}}\hspace{-6mm}%
P_{R,st}(x,y)f_{XY}(x,y)dxdy=P_{avg}-\epsilon .
\end{eqnarray}%
Hence, we obtain the same right hand side in (\ref{eq_n3}) as in (\ref{eq_n1}). Therefore, we can equivalent the left hand sides of   (\ref{eq_n3}) and (\ref{eq_n1}), and after some manipulations  obtain
\begin{eqnarray}
&&\hspace{-6mm}\int \!\!\!\!\int_{(x,y)\in \mathcal{D}_{0}}\hspace{-8mm}%
P_{R,st}(x,y)f_{XY}(x,y)dxdy  \notag  \label{eq_n4} \\
&&\hspace{-6mm}=\int \!\!\!\!\int_{(x,y)\in \mathcal{S}}\hspace{-6mm}%
P_{R,st}(x,y)f_{XY}(x,y)dxdy .
\end{eqnarray}%
Since, $P_{R,st}(x,y)\leq \rho$  for $(x,y)\in \mathcal{D}_{0}$, and   $P_{R,st}(x,y)>\rho $
for $(x,y)\in \mathcal{S}$,  (\ref{eq_n4}) holds if
and only if
\begin{equation} \label{eq_n5}
\int \!\!\!\!\int_{(x,y)\in \mathcal{D}_{0}}\hspace{-8mm}f_{XY}(x,y)dxdy\geq
\int \!\!\!\!\int_{(x,y)\in \mathcal{S}}\hspace{-6mm}f_{XY}(x,y)dxdy
\end{equation}%
holds, where equality holds if and only if $\epsilon =0$, in which case both
$\mathcal{D}_{0}$ and $\mathcal{S}$ are empty sets, which means that $%
P_{R}^{+}(x,y)=P_{R}^{\ast }(x,y)$.

On the other hand, the outage probability obtained with $P_{R}^{+}(x,y)$ is
given by
\begin{equation}
P_{out}^{+}=1-\int \!\!\!\!\int_{(x,y)\in \mathcal{D}_{1}}\hspace{-6mm}%
f_{XY}(x,y)dxdy-\int \!\!\!\!\int_{(x,y)\in \mathcal{S}}\hspace{-6mm}%
f_{XY}(x,y)dxdy,  \label{eq_n6}
\end{equation}%
i.e., there is no outage for those $(x,y)$ for which the relay's power $%
P_{R}(x,y)\geq P_{R,st}$, and for $P_{R}^{+}(x,y)$ this holds only for $%
(x,y)\in \mathcal{D}_{1}$ and $(x,y)\in \mathcal{S}$, which leads to (\ref%
{eq_n6}). Inserting the bound in (\ref{eq_n5}) into (\ref{eq_n6}), we obtain
\begin{eqnarray}
&&\hspace{-6mm} P_{out}^{+}\geq 1-\int \!\!\!\!\int_{(x,y)\in \mathcal{D}_{1}}%
\hspace{-8mm}f_{XY}(x,y)dxdy-\int \!\!\!\!\int_{(x,y)\in \mathcal{D}}\hspace{%
-6mm}f_{XY}(x,y)dxdy  \notag  \label{eq_n7} \\
&&\hspace{-6mm}=1-\int \!\!\!\!\int_{(x,y)\in \mathcal{D}}\hspace{-8mm}%
f_{XY}(x,y)dxdy=P_{out}^{\ast } \,,
\end{eqnarray}%
where equality holds if and only if $\mathcal{S}$ in a empty set in which
case $P_{R}^{+}(x,y)=P_{R}^{\ast }(x,y)$. Hence, $P_{out}^{+}>P_{out}^{*}$ for any $P_{R}^{+}(x,y) \neq P_{R}^{*}(x,y)$. This concludes the proof.

\section{Proof of Theorem 3}
The solution to (\ref{20}) must satisfy Preposition 1 in Appendix A. One solution which satisfies Preposition 1  is $P_{R}^{\ast }(x,y)$ given by (\ref{21}). Let us denote any solution other than $P_{R}^{\ast }(x,y)$ that still satisfies Preposition 1 as $P_{R}^{+}(x,y)$. Then, following a similar procedure as in the proof of Preposition 2 in Appendix A, we can obtain the expressions for the system OP and average relay powers resulting from $P_{R}^{\ast}(x,y)$ and $P_{R}^{+}(x,y)$. Preposition 2 of Appendix A proves that, if the average values of $P_{R}^{\ast }(x,y)$ and $P_{R}^{+}(x,y)$ are equal, then the system OP resulting from solution $P_{R}^{\ast }(x,y)$ is less than the system OP resulting from solution $P_{R}^{+ }(x,y)$. Following a similar approach, if we set the system OPs resulting from the solutions $P_{R}^{\ast }(x,y)$ and $P_{R}^{+}(x,y)$ to be equal, then the average value of $P_{R}^{\ast }(x,y)$ is always less than the average value of $P_{R}^{+ }(x,y)$. Hence, $P_{R}^{\ast }(x,y)$ gives the minimal average relay output power for a given system OP. This completes the sketch of the proof.

\end{document}